\newtheorem{theorem}{Theorem}
\newtheorem{lemma}{\textbf{Lemma}}[section]
\newtheorem{remark}{\textbf{Remark}}[section]
\newtheorem{corollary}{\textbf{Corollary}}[section]
\newtheorem{example}{\textbf{Example}}[section]
\newcommand{\F}{\mathbb{F}}
\begin{document}

\baselineskip 17pt
\title{\Large\bf Construction of MDS Self-dual Codes over Finite Fields}

\author{\large  Khawla Labad\qquad Hongwei Liu\qquad Jinquan
Luo*}\footnotetext{The authors are with school of mathematics and
statistics, Central China Normal University, Wuhan, China. \\
E-mail: 2768400154@qq.com(K.Labad),\quad
hwliu@mail.ccnu.edu.cn(H.Liu),\quad
luojinquan@mail.ccnu.edu.cn(J.Luo)}
\date{}
\maketitle

{\bf Abstract}:
   In this paper, we obtain some new results on the existence of MDS self-dual codes utilizing (extended) generalized Reed-Solomon codes
  over finite fields of odd characteristic.  For some fixed $q$, our results can produce more classes
  of MDS self-dual  codes than previous works.

{\bf Key words}: Generalized Reed-Solomon code, Extended
Reed-Solomon code, MDS codes, Self-dual code.
\section{Introduction}

 \quad\; A linear $[n,k,d]$ code over a finite field $\mathbb{F}_{q}$ of length $n$, dimension $k$ and minimum distance $d$ is called MDS
 (maximum distance separable) if it attains the Singleton bound: $d = n-k + 1$. MDS codes have been of much interest from many
 researchers due to their theoretical significant and practical implications, see [\ref{DSDY}], [\ref{KKO}], [\ref{LR}].

For a linear code $C$, we denote the dual of $C$ under Euclidean inner product by $C^{\bot}$.  The linear code $C$ is called self-dual if  $C = C^\bot$.
Self-dual codes have  attracted attention from  coding theory, cryptograph and other fields.
It has been found various applications in cryptography, in particular secret sharing scheme [\ref{CDG}], [\ref{DMS}], [\ref{MAS}] and combinatorics [\ref{MS}].

More recently, the application of MDS codes renewed the interest in
the construction of MDS self-dual codes, see [\ref{AKL}],
[\ref{BGG}], [\ref{GK}], [\ref{KL}]. K. Guenda [\ref{GUE}]
constructed MDS Euclidean and Hermitian self-dual codes which are
extended cyclic duadic codes or negacyclic codes. She also
constructed Euclidean self-dual codes which are extended negacyclic
codes.

Generalized Reed-Solomon ($\mathbf{GRS}$) codes is a class of MDS
code which has nice algebraic structure. It can be systematically
constructed and
 has been found wide applications in practice.
 MDS self-dual codes through $\mathbf{GRS}$ codes have been studied by L. Jin and C. Xing [\ref{JX}],
 where they constructed several classes of MDS self-dual codes through
$\mathbf{GRS}$ codes by choosing suitable parameters. In
[\ref{Yan}], H. Yan generalizes the technique in [\ref{JX}] and construct several classes
 of MDS self-dual codes via $\mathbf{GRS}$ codes and extended $\mathbf{GRS}$ codes.

Since MDS self-dual codes over finite field of even characteristic
with any possible parameter have been found in [\ref{GA}]. In this
paper, we obtain some new results on the existence of MDS self-dual
codes through (extended) $\mathbf{GRS}$ codes over finite fields of
odd characteristic.  Some results in this paper extend those of
[\ref{JX}] and [\ref{Yan}]. Comparing to previous works, for some
fixed square prime power $q$, our construction will produce more
$q$-ary MDS self-dual codes.

This paper is organized as follows. In Section 2 we will introduce
some basis knowledge and auxiliary results on $\mathbf{GRS}$ codes
and extended $\mathbf{GRS}$ codes. In particular, Corollary \ref{c1}
and Corollary \ref{c2} give a criterion for an (extended)
$\mathbf{GRS}$ code to be self-dual. In Section 3 we will present
our main results on the construction of MDS self-dual codes. Our
main tools are Corollary \ref{c1} and Corollary \ref{c2}. We choose
suitable parameters to make the conditions in Corollary \ref{c1} and
Corollary \ref{c2} hold.

\begin{center}
\begin{longtable}{|c|c|c|}  
\caption{Known results on MDS self-dual codes of length
$n$\qquad\qquad ( $\eta$ is the quadratic character of
$\mathbb{F}_{q}$) }\footnotetext{Some results in Table 1 are overlapped. For instance, columns 6 and 7 in [\ref{GUE}] are special cases of column 4 in [\ref{Yan}].} \\
\hline $q$ & $n$ even
& Reference\\  \hline $q$ even  &  $n \leq q$   & [\ref{GA}] \\
\hline $q$ odd & $n=q+1$ & [\ref{GA}]\\ \hline $q$ odd & $(n-1)|(q -
1)$, $\eta(1 - n) = 1$ &   [\ref{Yan}] \\ \hline $q$ odd & $
(n-2)|(q - 1)$, $\eta(2 - n) = 1$ &   [\ref{Yan}]\\ \hline
$q = r^{s}\equiv 3\pmod{4}$ &  $n-1= p^m \mid(q-1)$, prime $p\equiv 3\pmod{4}$ and odd $m$ &  [\ref{GUE}]\\
\hline
$q = r^{s}$, $r \equiv 1 \pmod{4}$, $s$ odd &  $n-1= p^m\mid (q-1)$, odd $m$ and prime $p \equiv 1\pmod{4}$  &  [\ref{GUE}]\\
\hline
$q = r^{s}$ , $r$ odd, $s\geq 2$ & $n = lr$, even $l$ and $2l|(r - 1)$ &   [\ref{Yan}] \\ \hline

$q = r^{s}$ , $r$ odd, $s \geq 2$ & $n = lr$, even $l$, $(l - 1)|(r - 1)$ and $\eta(1 - l)=1$ &   [\ref{Yan}] \\ \hline

$q = r^{s}$ , $r$ odd, $s \geq 2$ & $n = lr + 1$, odd $l$, $l|(r - 1)$ and $\eta(l) = 1$  &   [\ref{Yan}] \\ \hline
 $q = r^{s}$ , $r$ odd, $s \geq 2$ & $n = lr + 1$, odd $l$, $(l - 1)|(r - 1)$ and $\eta(l - 1) = \eta(-1) = 1$ &  [\ref{Yan}] \\ \hline

$q=r^2$  & $n \leq r$  & [\ref{JX}] \\ \hline
$q = r^2, r \equiv 3\pmod{4}$  &  $n= 2tr$ for any $t \leq (r - 1)/2$ &   [\ref{JX}]\\ \hline

$q = r^2$, $r$ odd & $n = tr$, even $t$ and $1 \leq t \leq r$ &   [\ref{Yan}] \\ \hline

 $q = r^2$, $r$ odd & $n = tr + 1$, odd $t$ and $1 \leq t \leq r$ &   [\ref{Yan}] \\ \hline

$q  \equiv 1\pmod{4}$ &  $ n|(q - 1), n < q - 1$ &   [\ref{Yan}] \\ \hline
$q \equiv 1\pmod{4}$ &  $4^{n}\cdot n^{2} \leq q$ &  [\ref{JX}]\\ \hline

  $q = p^k $, odd prime $p$ & $n= p^r + 1$, $r|k$ &   [\ref{Yan}] \\ \hline
$q = p^k $, odd prime $p$ & $n= 2p^e$, $1 \leq e < k$, $\eta(-1) = 1$&  [\ref{Yan}] \\ \hline
\end{longtable}
 \end{center}
\newpage

\begin{center}
\begin{longtable}{|c|c|c|}  
\caption{Our  results} \\\hline $q$ &   $n$ (even)  &  Reference
\\\hline
$q=r^2$, $r$ odd & $n=tm$, $2\leq t \leq \frac{r-1}{\gcd(r-1,m)}$, even $\frac{q-1}{m}$ & Theorem \ref{mt even and n=mt} \\ \hline
$q=r^2$, $r$ odd & $n=tm+1$, $tm$ odd, $2\leq t \leq \frac{r-1}{\gcd(r-1,m)}$ and $m|(q-1)$  & Theorem \ref{mt odd and n=mt+1}\\ \hline
$q=r^2$, $r$ odd & $n=tm+2$, $tm$ even, $2\leq t \leq \frac{r-1}{\gcd(r-1,m)}$ and $m|(q-1)$   &  Theorem \ref{mt even and n=mt+2}\\\hline
$q=p^m$, odd prime $p$ & $n= 2tp^e$, $2t \mid (p-1)$ and $e<m$, even $\frac{q-1}{2t}$ & Theorem \ref{hayet}\\
\hline
\end{longtable}
 \end{center}

\section{Generalized Reed-Solomon codes}
In this section, we introduce some basic notations and results on generalized
Reed-Solomon code. Throughout this paper, let $\mathbb{F}_{q}$ be a finite field with $q$ elements, and let $n$ be a positive integer
with $1<n<q$. Choose $\mathbf{a} = (\alpha_{1}, \ldots, \alpha_{n})$ to be an $n$-tuple of distinct elements of $\mathbb{F}_{q}$. Put
$\mathbf{v}=(v_{1}, \ldots, v_{n})$ with $v_i\in \F_q^*$.  For an integer $k$ with $0\leq k\leq n$, then linear code
\begin{equation}\label{def GRS}
\mathbf{GRS}_{k}(\mathbf{a}, \mathbf{v}) = \{(v_{1}f(\alpha_1), \ldots, v_{n}f(\alpha_n)): f(x)\in {\mathbb{F}_{q}}[x], \deg(f(x))\leq k-1\}
\end{equation}
is the generalized Reed-Solomon or $\mathbf{GRS}$ code.\\
The code $\mathbf{GRS}_{k}( \mathbf{a}, \mathbf{v})$ has a generator matrix \\
\[\mathbf{G}_{k}(\mathbf{a}, \mathbf{v})=\begin{pmatrix}
    v_1 & v_2 & \dots  & v_n \\
   v_1\alpha_1 & v_2\alpha_2 &  \dots  & v_n\alpha_n\\
    \vdots & \vdots  & \ddots & \vdots \\
    v_1\alpha_{1}^{k-1} & v_2\alpha_{2}^{k-1} &  \dots  & v_n\alpha_{n}^{k-1}\end{pmatrix}.\]

It is well-known that the code $\mathbf{GRS}_{k}(\mathbf{a}, \mathbf{v})$  is a $q$-ary $[n, k]$-MDS code and its dual is also MDS.

We define
\begin{equation}\label{dfn L}
L_\mathbf{a}(\alpha_{i})= {\prod_{1 \leq j \leq n, j\neq i}{(\alpha_{i}-\alpha_{j})}}.
\end{equation}

The dual of $\mathbf{GRS}$ code is explicitly determined. Precisely, let $\mathbf{1}$ be the all-one vector with appropriate length.
The dual of $\mathbf{GRS}_{k}(\mathbf{a}, \mathbf{1})$ is $\mathbf{GRS}_{n-k}(\mathbf{a}, \mathbf{u})$,
where $\mathbf{u}=(u_{1}, \ldots, u_{n})$ with $u_{i} = L_\mathbf{a}(\alpha_{i})^{-1}$ for $1\leq i \leq n$.

\begin{remark}\label{a}
 The matrix $\mathbf{G}_{k}(\mathbf{a}, \mathbf{v})$ is a generator matrix of
 $\mathbf{GRS}_{k}(\mathbf{a}, \mathbf{v})$, and $\mathbf{G}_{k}(\mathbf{a}, \mathbf{1})$ is
 a generator matrix of $\mathbf{GRS}_{k}(\mathbf{a}, \mathbf{1})$. It is clear that
 $$\mathbf{G}_{k}(\mathbf{a}, \mathbf{v}) = \mathbf{G}_{k}(\mathbf{a}, \mathbf{1})diag(v_1, v_2, \dots, v_n), $$
 where $diag(v_1, v_2, \dots, v_n)$ is the diagonal matrix with diagonal entries $v_1, v_2, \dots, v_n$.
 \end{remark}

  We now introduce some basic notations and results on extended generalized Reed-Solomon code.
The $k$-dimensional extended $\mathbf{GRS}$ code of length $n$ given by
\begin{equation}\label{def GRS infty}
\mathbf{GRS}_{k}(\mathbf{a}, \mathbf{v}, \infty)=\{(v_{1}f(\alpha_1), \ldots, v_{n-1}f(\alpha_{n-1}), f_{k-1}):
f(x)\in {\mathbb{F}_{q}}[x], \deg(f(x))\leq k-1\},
\end{equation}
where $f_{k-1}$ is the coefficient of $x^{k-1}$ in $f(x)$.\\
Clearly, the code $\mathbf{GRS}_{k}(\mathbf{a}, \mathbf{v}, \infty)$ has a generator matrix
\[\mathbf{G}_{k}(\mathbf{a}, \mathbf{v}, \infty)=\begin{pmatrix}
    v_1 & v_2 & \dots  & v_{n-1} & 0 \\
   v_1\alpha_1 & v_2\alpha_2 &  \dots  & v_{n-1}\alpha_{n-1} & 0 \\
    \vdots & \vdots  & \ddots & \vdots & \vdots   \\
    v_1\alpha_{1}^{k-1} & v_2\alpha_{2}^{k-1} &  \dots  & v_{n-1}\alpha_{n-1}^{k-1} & 1 \end{pmatrix}.\]
It is known that $\mathbf{GRS}_{k}(\mathbf{a}, \mathbf{v}, \infty)$ is $q$-ary $[n, k]$-MDS code  and its dual is also MDS.
Precisely, the dual code of $\mathbf{GRS}_{k}(\mathbf{a}, \mathbf{1}, \infty)$ is $\mathbf{GRS}_{n-k}(\mathbf{a}, \mathbf{u}, \infty)$,
where $\mathbf{u}=(u_{1}, \ldots, u_{n})$ with $u_{i} = L_\mathbf{a}(\alpha_{i})^{-1}$ for $1\leq i \leq n-1$.
\begin{remark}\label{12}
 Let $\mathbf{G}_{k}(\mathbf{a}, \mathbf{v}, \infty)$ be a generator matrix for $\mathbf{GRS}_{k}(\mathbf{a}, \mathbf{v}, \infty)$,
 and let $\mathbf{G}_{k}(\mathbf{a}, \mathbf{1}, \infty)$ be a generator matrix for $\mathbf{GRS}_{k}(\mathbf{a}, \mathbf{1}, \infty)$.
 It is easy to see that the relationship between $\mathbf{G}_{k}(\mathbf{a}, \mathbf{v}, \infty)$, and $\mathbf{G}_{k}(\mathbf{a}, \mathbf{1}, \infty)$
 is follows
 \[\mathbf{G}_{k}(\mathbf{a}, \mathbf{v}, \infty)=\mathbf{G}_{k}(\mathbf{a}, \mathbf{1}, \infty)diag(v_1, v_2, \dots, v_{n-1}, 1), \]
 where $diag(v_1, v_2, \dots, v_{n-1}, 1)$ is the diagonal matrix with diagonal entries $v_1, v_2, \dots, v_{n-1}, 1$.
 \end{remark}

 The following lemma gives a criterion for a $\mathbf{GRS}$ code to be self-dual.
 \begin{lemma}\label{y}
Let $n$ be an even integer, and $k=\frac{n}{2}$. The code $\mathbf{GRS}_{k}(\mathbf{a}, \mathbf{v})$ is a $q$-ary $[n, k]$ MDS
self-dual code over $\mathbb{F}_{q}$ if and only if for any codeword $\mathbf{c} \in \mathbf{GRS}_{k}(\mathbf{a}, \mathbf{v})$, where
$\mathbf{c}=(v_{1}f(\alpha_1), \ldots, v_{n}f(\alpha_n))$, we have
$(v_{1}^{2}f(\alpha_1), \ldots, v_{n}^{2}f(\alpha_n))\in \mathbf{GRS}_{k}(\mathbf{a}, \mathbf{1})^{\bot}$.
   \end{lemma}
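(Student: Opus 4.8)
The plan is to separate the two assertions packaged in the statement: the MDS property and self-duality. The MDS part requires no work, since it is recorded above that every $\mathbf{GRS}_{k}(\mathbf{a}, \mathbf{v})$ is an $[n,k]$-MDS code, so the entire content is the stated equivalence for self-duality. My first move is a dimension count. Writing $C := \mathbf{GRS}_{k}(\mathbf{a}, \mathbf{v})$, we have $\dim C = k = n/2$ while $\dim C^{\bot} = n-k = n/2$. Hence the equality $C = C^{\bot}$ is equivalent to the single inclusion $C \subseteq C^{\bot}$, that is, to self-orthogonality. This reduces the lemma to identifying self-orthogonality of $C$ with the membership condition.

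Next I would translate everything into inner products of polynomial evaluations. Taking a generic pair of codewords $\mathbf{c} = (v_{1}f(\alpha_{1}), \ldots, v_{n}f(\alpha_{n}))$ and $\mathbf{c}' = (v_{1}g(\alpha_{1}), \ldots, v_{n}g(\alpha_{n}))$ with $\deg f, \deg g \leq k-1$, the Euclidean inner product is
$$\langle \mathbf{c}, \mathbf{c}' \rangle = \sum_{i=1}^{n} v_{i}^{2} f(\alpha_{i}) g(\alpha_{i}),$$
so $C \subseteq C^{\bot}$ says exactly that this sum vanishes for all admissible $f, g$. On the other hand, $\mathbf{GRS}_{k}(\mathbf{a}, \mathbf{1})$ is precisely the set of evaluation vectors $(g(\alpha_{1}), \ldots, g(\alpha_{n}))$ with $\deg g \leq k-1$, so a vector $\mathbf{w}$ lies in $\mathbf{GRS}_{k}(\mathbf{a}, \mathbf{1})^{\bot}$ iff $\sum_{i} w_{i} g(\alpha_{i}) = 0$ for all such $g$. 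Setting $w_{i} = v_{i}^{2} f(\alpha_{i})$, this is verbatim the vanishing of the sum above for all $g$. Thus, for a fixed $\mathbf{c}$, the assertion ``$(v_{1}^{2}f(\alpha_{1}), \ldots, v_{n}^{2}f(\alpha_{n})) \in \mathbf{GRS}_{k}(\mathbf{a}, \mathbf{1})^{\bot}$'' is equivalent to ``$\mathbf{c}$ is orthogonal to every codeword of $C$.'' Quantifying over all $\mathbf{c} \in C$ gives the desired equivalence with $C \subseteq C^{\bot}$, and since the whole chain consists of iff statements, both directions of the lemma drop out symmetrically.

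I do not expect a genuine obstacle here; the lemma is essentially a reformulation, and the only things to handle cleanly are bookkeeping points. First, the target vector is well-defined directly from $\mathbf{c}$, because $v_{i} \neq 0$ yields $v_{i}^{2}f(\alpha_{i}) = v_{i}\,c_{i}$, so no ambiguity in the choice of $f$ arises. Second, I would be explicit that the passage from the self-orthogonality inclusion $C \subseteq C^{\bot}$ to the self-duality equality $C = C^{\bot}$ relies on the dimension equality $\dim C = \dim C^{\bot} = n/2$ from the first step. With these two observations recorded, the equivalence is immediate in both directions.
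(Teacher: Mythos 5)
Your proof is correct and is in substance the same as the paper's: the paper phrases the key identity via the generator-matrix factorization $\mathbf{G}_{k}(\mathbf{a}, \mathbf{v}) = \mathbf{G}_{k}(\mathbf{a}, \mathbf{1})\,diag(v_1,\dots,v_n)$, which is exactly your inner-product computation $\langle \mathbf{c},\mathbf{c}'\rangle=\sum_i v_i^2 f(\alpha_i)g(\alpha_i)$ written against the monomial basis. If anything, your version is slightly more careful, since you make explicit the dimension count that upgrades self-orthogonality $C\subseteq C^{\bot}$ to self-duality, a step the paper's proof leaves implicit.
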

\begin{proof}
 Since $\mathbf{GRS}_{k}(\mathbf{a}, \mathbf{v})=\mathbf{GRS}_{k}(\mathbf{a}, \mathbf{v})^\bot$, for any $\mathbf{c}=(v_{1}f(\alpha_1), \ldots, v_{n}f(\alpha_n))\in \mathbf{GRS}_{k}(\mathbf{a},
 \mathbf{v})$,  we
 have
 $\mathbf{G}_{k}(\mathbf{a}, \mathbf{v}){\mathbf{c}}^{T}=0$, where $\mathbf{G}_{k}(\mathbf{a}, \mathbf{v})$ is a generator matrix for
 $\mathbf{GRS}_{k}(\mathbf{a}, \mathbf{v})$. By remark(\ref{a}), we can obtain that
 \[
 \begin{array}{rcl}
 \mathbf{G}_{k}(\mathbf{a}, \mathbf{v})\mathbf{c}^{T} &=&
 (\mathbf{G}_{k}(\mathbf{a}, \mathbf{1})\cdot diag(v_1, v_2, \dots, v_n))\cdot \mathbf{c}^{T} \\[2mm]
 &=&\mathbf{G}_{k}(\mathbf{a},
 \mathbf{1})\cdot (v_{1}^{2}f(\alpha_1), \ldots, v_{n}^{2}f(\alpha_n))^{T}\\[2mm]
 &=&0.
 \end{array}
 \]
This implies that $\mathbf{GRS}_{k}(\mathbf{a}, \mathbf{v})$ is MDS self-dual if and only if
\[\mathbf{G}_{k}(\mathbf{a}, \mathbf{1})\cdot (v_{1}^{2}f(\alpha_1), \ldots, v_{n}^{2}f(\alpha_n))^{T}=0.\] In other words,
$(v_{1}^{2}f(\alpha_1), \ldots, v_{n}^{2}f(\alpha_n))\in \mathbf{GRS}_{k}(\mathbf{a}, \mathbf{1})^{\bot}$. Therefore, the lemma is proved.
\end{proof}
\begin{corollary}\label{c1}([\ref{JX}], Corollary 2.4) For  an even integer $n$, and $k=\frac{n}{2}$. If there exist
$\lambda \in \mathbb{F}_q^* $ such that $\lambda
L_\mathbf{a}(\alpha_{i})= w_{i}^{2}$ for some $w_{i} \in
\mathbb{F}_{q}^{\ast}$ for all $1\leq i \leq n$, then the code
$\mathbf{GRS}_{k}(\mathbf{a}, \mathbf{v})$ defined in (\ref{def
GRS}) is MDS self-dual, where $v_{i}= w_{i}^{-1}$ for all $1\leq i
\leq n$. \qquad $\square$
\end{corollary}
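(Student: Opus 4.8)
The plan is to reduce everything to Lemma \ref{y} together with the explicit description of the dual of $\mathbf{GRS}_{k}(\mathbf{a}, \mathbf{1})$ recalled above. By Lemma \ref{y}, with $k = \tfrac{n}{2}$ the code $\mathbf{GRS}_{k}(\mathbf{a}, \mathbf{v})$ is MDS self-dual precisely when, for every $f(x)$ with $\deg f \le k-1$, the vector $(v_{1}^{2}f(\alpha_1), \ldots, v_{n}^{2}f(\alpha_n))$ lies in $\mathbf{GRS}_{k}(\mathbf{a}, \mathbf{1})^{\bot}$. So I would simply verify this membership under the stated hypothesis.

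First I would unwind the hypothesis on the weights. Since $v_{i}= w_{i}^{-1}$ and $\lambda L_\mathbf{a}(\alpha_{i})= w_{i}^{2}$, squaring and inverting gives $v_i^2 = w_i^{-2} = \lambda^{-1} L_\mathbf{a}(\alpha_i)^{-1} = \lambda^{-1} u_i$, where $u_i = L_\mathbf{a}(\alpha_i)^{-1}$ is exactly the weight appearing in the dual code. Hence $(v_{1}^{2}f(\alpha_1), \ldots, v_{n}^{2}f(\alpha_n)) = \lambda^{-1}(u_{1}f(\alpha_1), \ldots, u_{n}f(\alpha_n))$.

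Next I would identify the target code. Because $k = \tfrac{n}{2}$, we have $n-k = k$, so the dual $\mathbf{GRS}_{n-k}(\mathbf{a}, \mathbf{u})$ of $\mathbf{GRS}_{k}(\mathbf{a}, \mathbf{1})$ is itself a $\mathbf{GRS}$ code of dimension $k$. Since $\deg f \le k-1 = (n-k)-1$, the vector $(u_{1}f(\alpha_1), \ldots, u_{n}f(\alpha_n))$ is by definition a codeword of $\mathbf{GRS}_{n-k}(\mathbf{a}, \mathbf{u}) = \mathbf{GRS}_{k}(\mathbf{a}, \mathbf{1})^{\bot}$. Multiplying by the scalar $\lambda^{-1} \in \mathbb{F}_q^*$ keeps it in this linear code, so the membership required by Lemma \ref{y} holds for every admissible $f$. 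Applying Lemma \ref{y} then yields that $\mathbf{GRS}_{k}(\mathbf{a}, \mathbf{v})$ is MDS self-dual, as claimed.

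The argument is essentially bookkeeping, and I do not anticipate a genuine obstacle. The only point that needs care is aligning the two places where $L_\mathbf{a}(\alpha_i)^{-1}$ enters—once through the squared weights $v_i^2$ produced by the self-duality criterion of Lemma \ref{y}, and once through the dual weights $u_i$—and observing that the constant $\lambda$ is absorbed by linearity. The role of the hypothesis is precisely to force each $v_i^2$ to be a common scalar multiple of the corresponding $u_i$, which is exactly what the dual description demands.
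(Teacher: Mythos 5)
Your proof is correct and follows exactly the route the paper intends: the paper states Corollary \ref{c1} without proof (citing [\ref{JX}]), but Lemma \ref{y} together with the description of $\mathbf{GRS}_{k}(\mathbf{a}, \mathbf{1})^{\bot}$ as $\mathbf{GRS}_{n-k}(\mathbf{a}, \mathbf{u})$ with $u_i = L_{\mathbf{a}}(\alpha_i)^{-1}$ is precisely the machinery set up for this deduction, and your identification $v_i^2 = \lambda^{-1}u_i$ plus the degree count $\deg f \le k-1 = (n-k)-1$ closes the argument correctly.
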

The following lemma is important and it gives the necessary and sufficient condition
to construct self-dual codes via the extended $\mathbf{GRS}$ codes.
\begin{lemma}\label{b}
Let $n$  be even and $k=\frac{n}{2}$.
 The code $\mathbf{GRS}_{k}(\mathbf{a}, \mathbf{v}, \infty)$ is a $q$-ary $[n, k]$ MDS self-dual code over $\mathbb{F}_{q}$
  if and only if for any codeword $\mathbf{c} \in \mathbf{GRS}_{k}(\mathbf{a}, \mathbf{v}, \infty)$,
  where $\mathbf{c}=(v_{1}f(\alpha_1), \ldots, v_{n-1}f(\alpha_{n-1}), f_{k-1} )$,
  we have $(v_{1}^{2}f(\alpha_1), \ldots, v_{n-1}^{2}f(\alpha_{n-1}), f_{k-1})\in \mathbf{GRS}_{k}(\mathbf{a}, \mathbf{1}, \infty)^{\bot}$.
   \end{lemma}
 \begin{proof}
  Since $\mathbf{GRS}_{k}(\mathbf{a}, \mathbf{v}, \infty)$ is a $q$-ary $[n, k]$ MDS self-dual code , we have
  $\mathbf{G}_{k}(\mathbf{a}, \mathbf{v}, \infty){\mathbf{c}}^T=0$, where $\mathbf{G}_{k}(\mathbf{a}, \mathbf{v}, \infty)$ is a
  generator matrix for $\mathbf{GRS}_{k}(\mathbf{a}, \mathbf{v}, \infty)$, with $\mathbf{c}\in \mathbf{GRS}_{k}(\mathbf{a}, \mathbf{v}, \infty)$.
   By remark(\ref{12}), we get
  \[
  \begin{array}{rcl}
 \mathbf{G}_{k}(\mathbf{a}, \mathbf{v}, \infty)\mathbf{c}^{T}&=&(\mathbf{G}_{k}(\mathbf{a}, \mathbf{1}, \infty)diag(v_1, v_2, \dots, v_{n-1}, 1))\cdot\mathbf{c}^{T}\\[2mm]
 &=&
\mathbf{G}_{k}(\mathbf{a}, \mathbf{v}, \infty)\cdot (v_{1}^{2}f(\alpha_1), \ldots, v_{n-1}^{2}f(\alpha_{n-1}), f_{k-1} )^{T}\\[2mm]
 &=&0.
 \end{array}
 \]
 It follows that $\mathbf{GRS}_{k}(\mathbf{a}, \mathbf{v}, \infty)$ is a $q$-ary $[n+1, k]$ MDS self-dual code over $\mathbb{F}_{q}$
 if and only if \[\mathbf{G}_{k}(\mathbf{a}, \mathbf{1}, \infty)(v_{1}^{2}f(\alpha_1), \ldots, v_{n-1}^{2}f(\alpha_{n-1}), f_{k-1})^{T}=0.\]
 That means $(v_{1}^{2}f(\alpha_1), \ldots, v_{n-1}^{2}f(\alpha_{n-1}), f_{k-1})\in \mathbf{GRS}_{k}(\mathbf{a}, \mathbf{1}, \infty)^{\bot}$.
  This completes the proof.

 \end{proof}

\begin{corollary}\label{c2}([\ref{Yan}], Lemma 2) Let $n$ be even and $k=\frac{n}{2}$.
If $-L_\mathbf{a}(\alpha_{i})= w_{i}^{2}$ for some $w_{i}  \in
\mathbb{F}_{q}^{\ast}$ for all $1\leq i \leq n-1$, then the code
$\mathbf{GRS}_{k}(\mathbf{a}, \mathbf{v}, \infty)$ defined in
(\ref{def GRS infty}) is MDS self-dual, where $v_{i}=w_{i}^{-1}$ for
all $1\leq i \leq n-1$. \qquad $\square$
\end{corollary}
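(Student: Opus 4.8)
My plan is to deduce the corollary from the criterion of Lemma \ref{b}, turning the whole statement into a single membership check. By Lemma \ref{b}, since $n$ is even and $k=\frac{n}{2}$, the extended code $\mathbf{GRS}_{k}(\mathbf{a},\mathbf{v},\infty)$ is MDS self-dual exactly when, for every $f$ with $\deg f\le k-1$, the vector $(v_{1}^{2}f(\alpha_1),\ldots,v_{n-1}^{2}f(\alpha_{n-1}),f_{k-1})$ lies in $\mathbf{GRS}_{k}(\mathbf{a},\mathbf{1},\infty)^{\bot}$. Because $n-k=k$ here, the explicit dual recalled just before Remark \ref{12} reads $\mathbf{GRS}_{k}(\mathbf{a},\mathbf{1},\infty)^{\bot}=\mathbf{GRS}_{k}(\mathbf{a},\mathbf{u},\infty)$ with $u_{i}=L_\mathbf{a}(\alpha_{i})^{-1}$. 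So I would only need to produce, for each such $f$, a polynomial $g$ of degree at most $k-1$ whose associated codeword $(u_{1}g(\alpha_1),\ldots,u_{n-1}g(\alpha_{n-1}),\pm g_{k-1})$ equals the target vector.

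The decisive input is the hypothesis, which fixes a sign. From $-L_\mathbf{a}(\alpha_{i})=w_{i}^{2}$ and $v_{i}=w_{i}^{-1}$ I obtain $v_{i}^{2}=w_{i}^{-2}=-L_\mathbf{a}(\alpha_{i})^{-1}=-u_{i}$ for all $1\le i\le n-1$, so the target vector is $(-u_{1}f(\alpha_1),\ldots,-u_{n-1}f(\alpha_{n-1}),f_{k-1})$. This immediately dictates the choice $g=-f$: then $u_{i}g(\alpha_{i})=-u_{i}f(\alpha_{i})=v_{i}^{2}f(\alpha_{i})$ reproduces the first $n-1$ coordinates, and $\deg g\le k-1$ holds automatically. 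Everything then hinges on the single remaining coordinate at infinity.

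The step I expect to require the most care is matching the $\infty$-coordinate, which amounts to pinning down the sign convention carried by the infinity point in the dual of an extended code. I would settle this by recomputing that dual from scratch: testing orthogonality of $(f(\alpha_1),\ldots,f(\alpha_{n-1}),f_{k-1})$ against a candidate vector and invoking the Lagrange-type identity $\sum_{i=1}^{n-1}\frac{h(\alpha_i)}{L_\mathbf{a}(\alpha_i)}=(\text{coefficient of }x^{n-2}\text{ in }h)$ valid for $\deg h\le n-2$, one finds the infinity entry attached to $g$ is forced to be $-g_{k-1}$; with $g=-f$ it equals $-(-f_{k-1})=f_{k-1}$, matching the target exactly, so the membership holds and Lemma \ref{b} finishes the proof. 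A cleaner route that bypasses this bookkeeping is to verify self-orthogonality directly: for the codewords attached to $f$ and $h$ the same identity gives $\sum_{i=1}^{n-1}v_i^2 f(\alpha_i)h(\alpha_i)+f_{k-1}h_{k-1}=-\sum_{i=1}^{n-1}\frac{(fh)(\alpha_i)}{L_\mathbf{a}(\alpha_i)}+f_{k-1}h_{k-1}=0$, because $\deg(fh)\le n-2$ with leading coefficient $f_{k-1}h_{k-1}$; since the code has dimension $k=\frac{n}{2}$, self-orthogonality upgrades to self-duality.
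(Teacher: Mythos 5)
Your proof is correct. The paper gives no argument for this corollary (it is quoted from [\ref{Yan}] and closed with a box), but the intended route is plainly the one you take: apply Lemma \ref{b} together with the explicit dual of the extended code; your careful recomputation showing that the coordinate at infinity in $\mathbf{GRS}_{k}(\mathbf{a}, \mathbf{1}, \infty)^{\bot}$ carries a minus sign (so that $g=-f$ matches both the first $n-1$ coordinates and the last one) fixes the one sign the paper's stated description of that dual glosses over, and your alternative direct check $\sum_{i=1}^{n-1}v_i^2 f(\alpha_i)h(\alpha_i)+f_{k-1}h_{k-1}=-[x^{n-2}](fh)+f_{k-1}h_{k-1}=0$, upgraded to self-duality by the dimension count $k=\frac{n}{2}$, is also sound.
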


 \section{Main Results}

 The existence of MDS self-dual codes over  finite fields of even characteristic has been completely addressed in [\ref{GA}].
In this section, we construct serval new classes of MDS self-dual codes,
by using $\mathbf{GRS}$ codes and extended $\mathbf{GRS}$ codes over finite fields of odd characteristic.\\
The following lemma can be found in [\ref{Yan}].
 \begin{lemma}\label{ha}
 Let $m|(q-1)$ be a positive integer and let $\alpha \in \mathbb{F}_{q}$ be a primitive $m$-th root of unity.
 Then for any $1\leq i \leq m$, we have
 $$ {\prod_{1 \leq j \leq m, j\neq i}{(\alpha^{i}-\alpha^{j})}}=m\alpha^{-i}.\qquad \qquad\qquad\qquad\square$$
\end{lemma}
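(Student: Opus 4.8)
The plan is to recognize the elements $\alpha, \alpha^2, \dots, \alpha^m$ as precisely the complete set of $m$-th roots of unity in $\mathbb{F}_{q}$, and then to extract the desired product by logarithmic differentiation of $x^m - 1$. Since $\alpha$ is a primitive $m$-th root of unity, the powers $\alpha^1, \alpha^2, \dots, \alpha^m = 1$ are $m$ distinct elements, each satisfying $(\alpha^j)^m = 1$. As $x^m - 1$ has degree $m$, these powers are exactly its roots, so that
\[
x^m - 1 = \prod_{j=1}^{m}(x - \alpha^j).
\]

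Next I would take the formal derivative of both sides. On the left, $\frac{d}{dx}(x^m - 1) = m x^{m-1}$. On the right, the product rule yields a sum of $m$ terms, in each of which exactly one factor $(x - \alpha^j)$ has been removed. Evaluating this derivative at $x = \alpha^i$, every term that still contains the factor $(x - \alpha^i)$ vanishes, leaving only the single surviving term, so that
\[
\prod_{1 \le j \le m,\, j \neq i}(\alpha^i - \alpha^j) = m\,(\alpha^i)^{m-1}.
\]

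Finally I would simplify the right-hand side using $\alpha^m = 1$: indeed
\[
m\,(\alpha^i)^{m-1} = m\,\alpha^{im - i} = m\,(\alpha^m)^i\,\alpha^{-i} = m\,\alpha^{-i},
\]
which is exactly the claimed identity, completing the argument for each $1 \le i \le m$.

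There is no genuine obstacle here; the proof is a routine application of the product rule for the formal derivative. The only points deserving a line of care are: first, confirming that the $\alpha^j$ are pairwise distinct, which holds because $\alpha$ has multiplicative order exactly $m$, so that $x^m - 1$ is separable and the factorization above is legitimate; and second, noting that the formal derivative over $\mathbb{F}_{q}$ obeys the product rule and the vanishing-term argument is valid over any field, so the identity $\prod_{j \neq i}(\alpha^i - \alpha^j) = f'(\alpha^i)$ requires no analytic content.
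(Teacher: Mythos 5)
Your proof is correct. The paper itself gives no proof of this lemma---it is stated with a reference to [15] (Yan) and a box---so there is nothing to compare line by line; but your argument is complete and is essentially the standard one. The identification $x^m-1=\prod_{j=1}^m(x-\alpha^j)$ is justified exactly as you say (the powers are distinct because $\alpha$ has order exactly $m$), and evaluating the formal derivative at $\alpha^i$ gives $m(\alpha^i)^{m-1}=m\alpha^{-i}$. The usual writeup (visible in a commented-out computation in the source, and in [15]) instead factors $\alpha^i$ out of each term to get $\alpha^{i(m-1)}\prod_{k=1}^{m-1}(1-\alpha^k)$ and then evaluates $\frac{x^m-1}{x-1}=1+x+\cdots+x^{m-1}$ at $x=1$ to obtain the factor $m$; this is the same computation as yours in different packaging. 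One small point you could make explicit: since $m\mid(q-1)$, $m$ is coprime to the characteristic, so $m\neq 0$ in $\mathbb{F}_q$ and the right-hand side is genuinely nonzero, consistent with the left-hand side being a product of nonzero differences.
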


Let $\mathbb{F}_{q}^{\ast 2}$ denote the set of nonzero squares of $\mathbb{F}_{q}$.

\begin{theorem}\label{mt even and n=mt}
Let $q=r^2$ where $r$ is odd prime power,
and let $n=tm$ be a  positive integer with $2\leq t\leq
\frac{r-1}{\gcd(r-1,m)}$ and $m\mid (q-1)$. Assume both
 $\frac{q-1}{m}$ and $n$ are even.   Then there exists a $q$-ary $[n, \frac{n}{2}]$ MDS self-dual code over
   $\mathbb{F}_{q}$.
\end{theorem}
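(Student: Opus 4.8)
The plan is to apply Corollary \ref{c1}. Since $n$ is even and $k=\frac{n}{2}$, it suffices to exhibit $n$ distinct evaluation points $\mathbf{a}=(\alpha_{1},\ldots,\alpha_{n})$ together with a scalar $\lambda\in\mathbb{F}_{q}^{\ast}$ for which $\lambda L_{\mathbf{a}}(\alpha_{i})$ is a nonzero square in $\mathbb{F}_{q}$ for every $i$; the multiplier vector is then recovered by $v_{i}=w_{i}^{-1}$ with $w_{i}^{2}=\lambda L_{\mathbf{a}}(\alpha_{i})$. My choice of $\mathbf{a}$ is a union of $t$ multiplicative cosets of the group $\langle\alpha\rangle$ of $m$-th roots of unity, where $\alpha\in\mathbb{F}_{q}$ is a fixed primitive $m$-th root of unity (which exists since $m\mid(q-1)$). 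Concretely, I take coset representatives $\beta_{1},\ldots,\beta_{t}$ lying in the subfield units $\mathbb{F}_{r}^{\ast}$ and set $\mathbf{a}=\{\beta_{s}\alpha^{i}:1\leq s\leq t,\ 0\leq i\leq m-1\}$.

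Two elementary facts about $\mathbb{F}_{q}=\mathbb{F}_{r^{2}}$ drive the argument. First, every element of $\mathbb{F}_{r}^{\ast}$ is a square in $\mathbb{F}_{q}^{\ast}$: if $x\in\mathbb{F}_{r}^{\ast}$ then $x^{(q-1)/2}=(x^{r-1})^{(r+1)/2}=1$, using that $r$ is odd. Second, since $\frac{q-1}{m}$ is even, the subgroup $\langle\alpha\rangle$ is generated by an even power of a primitive root of $\mathbb{F}_{q}^{\ast}$, so every $m$-th root of unity, in particular each $\alpha^{-i}$, is a square in $\mathbb{F}_{q}^{\ast}$. For the representatives, I let $d=\gcd(r-1,m)$ and choose $\beta_{1},\ldots,\beta_{t}\in\mathbb{F}_{r}^{\ast}$ in pairwise distinct cosets of the unique order-$d$ subgroup of $\mathbb{F}_{r}^{\ast}$; this is possible precisely because there are $(r-1)/d=\frac{r-1}{\gcd(r-1,m)}\geq t$ such cosets. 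I then check that this makes the $n$ points $\beta_{s}\alpha^{i}$ distinct and the $m$-th powers $\beta_{s}^{m}$ pairwise distinct; both reduce to the same statement, namely $\beta_{s}/\beta_{s'}\notin\langle\alpha\rangle\cap\mathbb{F}_{r}^{\ast}$, and $\langle\alpha\rangle\cap\mathbb{F}_{r}^{\ast}$ is exactly the order-$d$ subgroup of $\mathbb{F}_{r}^{\ast}$.

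Next I compute $L_{\mathbf{a}}$ explicitly. Fixing a point $\beta_{s}\alpha^{i}$ and splitting the defining product over the same coset and the remaining cosets, the same-coset factor is $\beta_{s}^{m-1}\prod_{i'\neq i}(\alpha^{i}-\alpha^{i'})=m\,\beta_{s}^{m-1}\alpha^{-i}$ by Lemma \ref{ha}, while for each other coset the identity $\prod_{i'=0}^{m-1}(x-\beta_{s'}\alpha^{i'})=x^{m}-\beta_{s'}^{m}$ (the $\beta_{s'}\alpha^{i'}$ being the roots of $y^{m}-\beta_{s'}^{m}$) evaluated at $x=\beta_{s}\alpha^{i}$ collapses to $\beta_{s}^{m}-\beta_{s'}^{m}$ because $\alpha^{im}=1$. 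Hence
\[
L_{\mathbf{a}}(\beta_{s}\alpha^{i})=m\,\beta_{s}^{m-1}\,\alpha^{-i}\prod_{s'\neq s}(\beta_{s}^{m}-\beta_{s'}^{m}).
\]
Here $\beta_{s}^{m-1}\in\mathbb{F}_{r}^{\ast}$ and each difference $\beta_{s}^{m}-\beta_{s'}^{m}\in\mathbb{F}_{r}^{\ast}$ (nonzero by the distinctness just established), so $\beta_{s}^{m-1}$, the product over $s'\neq s$, and $\alpha^{-i}$ are all squares in $\mathbb{F}_{q}^{\ast}$ by the two facts above. Thus $L_{\mathbf{a}}(\beta_{s}\alpha^{i})$ equals $m$ times a square. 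Since $m\mid(q-1)$ gives $m\neq 0$ in $\mathbb{F}_{q}$, I set $\lambda=m^{-1}$, which makes $\lambda L_{\mathbf{a}}(\beta_{s}\alpha^{i})$ a nonzero square for every $s,i$; Corollary \ref{c1} then produces the desired MDS self-dual code, after noting $1<n\leq\mathrm{lcm}(r-1,m)\mid(q-1)<q$.

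The main obstacle is the bookkeeping around the representatives $\beta_{s}$: I must choose them so that simultaneously the $n$ evaluation points are distinct and the cross-coset differences $\beta_{s}^{m}-\beta_{s'}^{m}$ are nonzero (so they are genuine nonzero squares rather than $0$). The payoff is that both requirements collapse to the single condition that the $\beta_{s}$ represent distinct cosets of the order-$\gcd(r-1,m)$ subgroup of $\mathbb{F}_{r}^{\ast}$, which is exactly where the hypothesis $t\leq\frac{r-1}{\gcd(r-1,m)}$ enters. The two parity hypotheses are each used once: $n$ even to legitimize $k=\frac{n}{2}$ in Corollary \ref{c1}, and $\frac{q-1}{m}$ even to guarantee that the roots of unity $\alpha^{-i}$ are squares.
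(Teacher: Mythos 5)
Your proposal is correct and follows essentially the same route as the paper: the same evaluation set (a union of $t$ cosets $\beta_s\langle\alpha\rangle$ with $\beta_s\in\mathbb{F}_r^*$ representing distinct classes modulo $\langle\alpha\rangle\cap\mathbb{F}_r^*$), the same factorization of $L_{\mathbf{a}}$ via Lemma \ref{ha} and $x^m-\beta_{s'}^m=\prod_j(x-\beta_{s'}\alpha^j)$, and the same square-class argument ($\mathbb{F}_r^*\subset\mathbb{F}_q^{*2}$ and $\alpha\in\mathbb{F}_q^{*2}$ from $\frac{q-1}{m}$ even). Your only cosmetic deviation is taking $\lambda=m^{-1}$ where the paper observes $m\in\mathbb{F}_q^{*2}$ and takes $\lambda=1$; your bookkeeping for the distinctness of the points and of the $\beta_s^m$ is in fact slightly more explicit than the paper's.
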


\begin{proof}

Let $\alpha$ be a primitive $m$-th root of unity. Since one has
group isomorphism and embedding
\[\mathbb{F}_r^*\Big{/}\left(\mathbb{F}_r^*\cap \langle\alpha\rangle\right)\simeq \left(\mathbb{F}_r^*\times \langle\alpha\rangle\right)\Big{/}
\langle\alpha\rangle\leq \mathbb{F}_q^*\Big{/} \langle\alpha\rangle,\] and
$t\leq \frac{r-1}{\gcd(r-1,m)}$,  we can choose $\beta_{i}\in
\mathbb{F}_{r}^{\ast}$ ($0\leq i\leq t-1$) to be cost representatives of
$\left(\mathbb{F}_r^*\times \langle\alpha\rangle\right)\Big{/}\langle\alpha\rangle$ and
 $$\mathbf{a} =
(\alpha\beta_{0}, \ldots, \alpha^{m}\beta_{0}, \alpha\beta_{1},
\ldots, \alpha^{m}\beta_{1}, \ldots, \alpha\beta_{t-1}, \ldots,
\alpha^{m}\beta_{t-1} ).$$
 Then the entries of $\mathbf{a}$ are distinct elements of
$\mathbb{F}_{q}^*$. Note that
 \[x^m-\gamma^m=\prod_{1 \leq j \leq m}{(x-\gamma\alpha^{j})}\] for any $\gamma\in \mathbb{F}_{q}$. By Lemma \ref{ha}, for any $1\leq i \leq m$
 and for any $0\leq z \leq t-1$, we have
\[
\begin{array}{rcl}
L_{\mathbf{a}}(\beta_{z}\alpha^{i})
 &=&
 {\prod_{1 \leq j \leq m, j\neq i}{(\beta_{z}\alpha^{i}-\beta_{z}\alpha^{j})}}
 \times \prod_{l=0, l\neq z}^{t-1}{\prod_{j=1}^m{(\beta_{z}\alpha^{i}-\beta_{l}\alpha^{j})}}\\[2mm]
 &=&{\beta_{z}}^{m-1}\cdot
 m\cdot \alpha^{-i}\cdot \prod_{l=0, l\neq

 z}^{t-1}{({\beta_{z}}^{m}-{\beta_{l}}^{m})}.\\[2mm]
\end{array}
 \]
Note that $ \beta_{z}^{m-1},
 \prod_{l=0, l\neq z}^{t-1}{({\beta_{z}}^{m}-{\beta_{l}}^{m})}, m \in \mathbb{F}_r^* \subset \mathbb{F}_{q}^{*2}$ since  $\mathbb{F}_{r^2} = \mathbb{F}_{q}$.

  Since $\frac{q-1}{m}$ is even and $\alpha$ is primitive $m$-th root of unity,
  then $\alpha \in \mathbb{F}_{q}^{*2}$. As a consequence, $L_{\mathbf{a}}(\beta_{z}\alpha^{i})\in \mathbb{F}_{q}^{*2}$.
  By Corollary \ref{c1}, there exists a
   $q$-ary $[n, \frac{n}{2}]$ MDS self-dual code over $\mathbb{F}_{q}$.\\

\end{proof}

\begin{remark}
Theorem 3.4 (i) in [\ref{JX}] is a special case of the preceding result when $m=1$.
\end{remark}

\begin{example}When $r=9$ and $q=81$, we can choose $m=5$ and $t=6$. In this case $n=30$ and the preceding result shows that there exists $q$-ary MDS self-dual code of length $30$ which, as our best knowledge, has not been found in previous references.
\end{example}

\begin{theorem}\label{mt odd and n=mt+1}
Let $q=r^2$ where $r$ is odd prime power,
and let $n=tm+1$ be a positive integer with $2\leq t\leq
\frac{r-1}{\gcd(r-1,m)}$ and $m|(q-1)$. Assume $tm$ is odd. Then there exists a $q$-ary $[n, \frac{n}{2}]$ MDS self-dual code
over $\mathbb{F}_{q}$.
\end{theorem}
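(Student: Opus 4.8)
The plan is to mirror the construction used for Theorem \ref{mt even and n=mt}, but to realize the code as an \emph{extended} $\mathbf{GRS}$ code and invoke Corollary \ref{c2} in place of Corollary \ref{c1}. Since $tm$ is odd, $n = tm+1$ is even, so $k = \frac{n}{2}$ is a genuine dimension. First I would take $\alpha$ to be a primitive $m$-th root of unity (which exists since $m \mid (q-1)$) and, using the same group isomorphism $\mathbb{F}_r^*/(\mathbb{F}_r^* \cap \langle\alpha\rangle) \hookrightarrow \mathbb{F}_q^*/\langle\alpha\rangle$ together with the bound $t \leq \frac{r-1}{\gcd(r-1,m)}$, choose $\beta_0, \ldots, \beta_{t-1} \in \mathbb{F}_r^*$ in distinct cosets. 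Setting
\[ \mathbf{a} = (\alpha\beta_0, \ldots, \alpha^m\beta_0, \ldots, \alpha\beta_{t-1}, \ldots, \alpha^m\beta_{t-1}) \]
gives $tm = n-1$ distinct elements of $\mathbb{F}_q^*$ (distinctness and the non-vanishing needed below both follow from the coset choice, which forces $\beta_z^m \neq \beta_l^m$ for $z \neq l$); these serve as the finite evaluation points, and the extended code supplies the remaining coordinate "at infinity," for total length $n$.

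Next I would reuse the computation from the previous theorem to obtain, for $1 \leq i \leq m$ and $0 \leq z \leq t-1$,
\[ L_{\mathbf{a}}(\beta_z\alpha^i) = \beta_z^{m-1} \cdot m \cdot \alpha^{-i} \cdot \prod_{l=0,\,l\neq z}^{t-1}(\beta_z^m - \beta_l^m). \]
By Corollary \ref{c2}, the quantity to control is now $-L_{\mathbf{a}}(\beta_z\alpha^i)$, which I must show is a nonzero square at every finite evaluation point. The three factors $\beta_z^{m-1}$, $m$, and $\prod_{l\neq z}(\beta_z^m-\beta_l^m)$ all lie in $\mathbb{F}_r^* \subset \mathbb{F}_q^{*2}$, since every nonzero element of $\mathbb{F}_r$ is a square in $\mathbb{F}_{r^2} = \mathbb{F}_q$.

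The crux is to dispose of the extra sign $-1$ and the factor $\alpha^{-i}$, and this is exactly where the parity hypothesis does the work. Because $q = r^2 \equiv 1 \pmod 4$, $-1$ is a square in $\mathbb{F}_q$. For $\alpha^{-i}$ it suffices that $\alpha$ be a square, i.e. that $\frac{q-1}{m}$ be even when $\alpha = g^{(q-1)/m}$ for a primitive root $g$; and here I would observe that $tm$ odd forces $m$ odd, while $r$ odd gives $r^2 \equiv 1 \pmod 8$, hence $8 \mid (q-1)$, so that $\frac{q-1}{m}$ is even and $\alpha \in \mathbb{F}_q^{*2}$. Thus $-L_{\mathbf{a}}(\beta_z\alpha^i)$ is a product of nonzero squares for all $i$ and $z$, and Corollary \ref{c2} (with $v_i = w_i^{-1}$) delivers the desired $q$-ary $[n,\frac{n}{2}]$ MDS self-dual code. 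I expect the only delicate point to be this last observation: $m$ odd together with the square structure $q = r^2$ \emph{automatically} makes $\frac{q-1}{m}$ even, which replaces the explicit evenness hypothesis on $\frac{q-1}{m}$ available in Theorem \ref{mt even and n=mt} and is precisely what absorbs the extra minus sign demanded by the extended-code criterion.
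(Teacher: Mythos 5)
Your proposal is correct and follows essentially the same route as the paper: the same evaluation set $\mathbf{a}$ of $tm = n-1$ points, the same computation of $L_{\mathbf{a}}(\beta_z\alpha^i)$, and the same square-class analysis (with $m$ odd forcing $\alpha \in \mathbb{F}_q^{*2}$, and $\mathbb{F}_r^* \subset \mathbb{F}_q^{*2}$ plus $-1 \in \mathbb{F}_q^{*2}$ absorbing the remaining factors) before invoking Corollary \ref{c2}. The paper's own proof is just a terser version of exactly this argument.
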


\begin{proof}
The proof is similar as that of Theorem \ref{mt even and n=mt}.
\[
\begin{array}{rcl}
L_{\mathbf{a}}(\beta_{z}\alpha^{i})
 &=&{\beta_{z}}^{m-1}\cdot
 m\cdot \alpha^{-i}\cdot \prod_{l=0, l\neq
 z}^{t-1}{({\beta_{z}}^{m}-{\beta_{l}}^{m})}.\\[2mm]
\end{array}
 \]
Note that $ \beta_{z}^{m-1},
 \prod_{l=0, l\neq z}^{t-1}{({\beta_{z}}^{m}-{\beta_{l}}^{m})},-m \in \mathbb{F}_r^* \subset \mathbb{F}_{q}^{*2}$ since $\mathbb{F}_{r^2} = \mathbb{F}_{q}$.

  Since $m$ is odd, then $\alpha \in \mathbb{F}_{q}^{*2}$. Therefore
 $-L_{\mathbf{a}}(\beta_{z}\alpha^{i})\in \F_q^{*2}$. By Corollary \ref{c2}, there exists a
   $q$-ary $[n, \frac{n}{2}]$ MDS self-dual code over $\mathbb{F}_{q}$.

\end{proof}

\begin{theorem}\label{mt even and n=mt+2}
Let $q=r^2$  where $r$ is odd prime power,
and let $n=tm+2$ be a positive integer with $2\leq t\leq
\frac{r-1}{\gcd(r-1,m)}$ and $m|(q-1)$. Assume $tm$ is even. Then there exists a $q$-ary $[n, \frac{n}{2}]$
MDS self-dual code over
   $\mathbb{F}_{q}$.
\end{theorem}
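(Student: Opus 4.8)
The plan is to run the argument of Theorem \ref{mt even and n=mt} almost verbatim, except that we build an \emph{extended} $\mathbf{GRS}$ code and invoke Corollary \ref{c2} in place of Corollary \ref{c1}. Concretely, I would keep the same primitive $m$-th root of unity $\alpha$ and the same coset representatives $\beta_0,\dots,\beta_{t-1}\in\mathbb{F}_r^*$, yielding the $tm$ distinct nonzero elements $\beta_z\alpha^i$. To reach length $n=tm+2$ I would adjoin the single finite evaluation point $0$, so that
\[\mathbf{a}=(\beta_0\alpha,\dots,\beta_0\alpha^m,\dots,\beta_{t-1}\alpha,\dots,\beta_{t-1}\alpha^m,0)\]
consists of $tm+1$ distinct elements of $\mathbb{F}_q$ (distinctness of $0$ from the others is immediate since each $\beta_z\alpha^i\neq 0$), and then pass to the extended $\mathbf{GRS}$ code using the point at infinity, which has length $n=tm+2$. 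Since $tm$ is even, $k=\frac{n}{2}=\frac{tm}{2}+1$ is a genuine integer, so this is a legitimate candidate for a self-dual code.

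The computational core is to evaluate $L_{\mathbf{a}}$ at the $n-1=tm+1$ finite points and verify, via Corollary \ref{c2}, that $-L_{\mathbf{a}}$ is a nonzero square at each. For a nonzero point $\beta_z\alpha^i$ the defining product picks up exactly one extra factor relative to Theorem \ref{mt even and n=mt}, namely $\beta_z\alpha^i-0=\beta_z\alpha^i$; multiplying the earlier value $\beta_z^{m-1}m\alpha^{-i}\prod_{l\neq z}(\beta_z^m-\beta_l^m)$ by this factor cancels the $\alpha^{\pm i}$ and leaves
\[L_{\mathbf{a}}(\beta_z\alpha^i)=\beta_z^{m}\cdot m\cdot\prod_{l=0,\,l\neq z}^{t-1}(\beta_z^m-\beta_l^m)\in\mathbb{F}_r^*.\]
For the new point $0$ I would compute directly
\[L_{\mathbf{a}}(0)=\prod_{z,i}(0-\beta_z\alpha^i)=(-1)^{tm}\Bigl(\prod_{z=0}^{t-1}\beta_z^m\Bigr)\Bigl(\prod_{i=1}^m\alpha^i\Bigr)^{t}.\]

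The one genuinely new issue—and the step I expect to be the main obstacle—is controlling the root-of-unity contribution $\bigl(\prod_{i=1}^m\alpha^i\bigr)^t$ without assuming $\alpha$ is a square, since the hypotheses here deliberately drop the parity condition on $\frac{q-1}{m}$ that was used in Theorem \ref{mt even and n=mt}. The resolution is the observation that $\prod_{i=1}^m\alpha^i=\alpha^{m(m+1)/2}$ equals $1$ when $m$ is odd and equals $\alpha^{m/2}=-1$ when $m$ is even, so this factor is always $\pm1$. Hence $L_{\mathbf{a}}(0)$, exactly like each $L_{\mathbf{a}}(\beta_z\alpha^i)$, is a product of an element of $\mathbb{F}_r^*$ with a power of $-1$. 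Finally, because $q=r^2\equiv 1\pmod 4$ we have $-1\in\mathbb{F}_q^{*2}$, and because $\mathbb{F}_{r^2}=\mathbb{F}_q$ every element of $\mathbb{F}_r^*$ is a square in $\mathbb{F}_q$; combining these gives $-L_{\mathbf{a}}(\alpha_i)\in\mathbb{F}_q^{*2}$ at all $tm+1$ finite points. Corollary \ref{c2} then produces the desired $q$-ary $[n,\frac{n}{2}]$ MDS self-dual code.
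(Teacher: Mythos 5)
Your proposal is correct and follows essentially the same route as the paper: the same evaluation set $\{\beta_z\alpha^i\}\cup\{0\}$, the same computation of $L_{\mathbf{a}}$ at the nonzero points and at $0$ (the paper likewise reduces $\alpha^{m(m+1)/2}$ to $\pm 1$), and the same appeal to $\mathbb{F}_r^*\subset\mathbb{F}_q^{*2}$ together with Corollary \ref{c2}. Your treatment of the sign $(-1)^{tm}$ and the exponent $t$ in $L_{\mathbf{a}}(0)$ is in fact slightly more careful than the paper's.
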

\begin{proof}
The elements $\alpha$ and $\beta_i$ are chosen in the same way as in
Theorem \ref{mt even and n=mt}. We define the generalized
Reed-Solomn code $\mathbf{GRS}_{k}(\mathbf{a}, \mathbf{v}, \infty)$
with
\[\mathbf{a} =
(0, \alpha\beta_{0}, \ldots, \alpha^{m}\beta_{0}, \alpha\beta_{1}, \ldots, \alpha^{m}\beta_{1}, \ldots, \alpha\beta_{t-1}, \ldots,
\alpha^{m}\beta_{t-1} ).\] Similar as the above proofs, for any
$1\leq i \leq m$
 and for any $0\leq z \leq t-1$,
\[
\begin{array}{rcl}
L_{\mathbf{a}}(\beta_{z}\alpha^{i})
 &=&\beta_z \alpha^i\times
 {\prod_{1 \leq j \leq m, j\neq i}{(\beta_{z}\alpha^{i}-\beta_{z}\alpha^{j})}}
 \times \prod_{l=0, l\neq z}^{t-1}{\prod_{1 \leq j \leq m}{(\beta_{z}\alpha^{i}-\beta_{l}\alpha^{j})}}\\[2mm]
 &=&{\beta_{z}}^{m}\cdot
 m\cdot  \prod_{l=0, l\neq
 z}^{t-1}{({\beta_{z}}^{m}-{\beta_{l}}^{m})}\\[2mm]
\end{array}
 \]
 and
 \[
 L_{\mathbf{a}}(0)
 =
 \prod_{l=0}^{t-1}{\prod_{j=1}^m{(0-\beta_{l}\alpha^{j})}}=\left(\prod_{l=0}^{t-1}\beta_l\right)^m
 \cdot\alpha^{\frac{m(m+1)}{2}}=\pm
 \left(\prod_{l=0}^{t-1}\beta_l\right)^m.
 \]
Note that $\beta_{l}, {\beta_{z}}^{m}-{\beta_{l}}^{m}$, $-1$ and $m \in \mathbb{F}_{q}^{*2}$. Hence both $-L_{\mathbf{a}}(\beta_{z}\alpha^{i})$ and $-L_{\mathbf{a}}(0)$
are in $\F_q^{*2}$. By Corollary \ref{c2}, there exists a
$q$-ary MDS self-dual code of length $n=mt+2$.
\end{proof}

 \begin{theorem}\label{hayet}
 Let $q=p^m$ with $p$ odd prime.
 For any $t$ with $2t \mid (p-1)$ and $e<m$, if $\frac{q-1}{2t}$ is even, then there exists self-dual MDS code with length $2tp^e$.
 \end{theorem}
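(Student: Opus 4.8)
The plan is to apply Corollary \ref{c1}, choosing the evaluation tuple $\mathbf{a}$ so that it carries simultaneously the additive structure needed to account for the factor $p^e$ and the multiplicative structure needed to invoke Lemma \ref{ha}. Since $2t\mid (p-1)$, there is an element $\theta\in\mathbb{F}_p^{*}$ of exact order $2t$; since $e<m$, there is an $\mathbb{F}_p$-subspace $V\subseteq\mathbb{F}_q$ with $\dim_{\mathbb{F}_p}V=e$, so that $|V|=p^{e}$. The key object is the subspace polynomial $L_V(x)=\prod_{v\in V}(x-v)$, which is $\mathbb{F}_p$-linearized (a $p$-polynomial), monic of degree $p^{e}$, and whose linear coefficient equals $c_0=\prod_{0\neq v\in V}v\neq 0$. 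Because $p$ is odd, $p^{e}-1$ is even, and $c_0$ is exactly the value of the ``same-coset'' product appearing below.

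Next I would fix $\beta\notin V$ and set $\xi:=L_V(\beta)\neq 0$, and define $\mathbf{a}$ to be the union of the $2t$ cosets $\theta^{j}\beta+V$ for $0\leq j\leq 2t-1$. Using $\theta\in\mathbb{F}_p$ together with the $\mathbb{F}_p$-linearity of $L_V$, one has $L_V(\theta^{j}\beta)=\theta^{j}\xi$, and these are distinct since $\theta$ has order $2t$; hence the $2t$ cosets are pairwise disjoint and $\mathbf{a}$ consists of $n=2tp^{e}$ distinct elements. Moreover $n=2tp^{e}\leq (p-1)p^{m-1}=q-p^{m-1}<q$, so $\mathbf{a}$ is a legitimate set of GRS evaluation points, and $n$ is even with $k=n/2=tp^{e}$.

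The heart of the argument is the computation of $L_{\mathbf{a}}(a)$ for $a=\theta^{j_0}\beta+v_0$, split into same-coset and cross-coset contributions. For the same coset, as $v$ runs over $V\setminus\{v_0\}$ the difference $v_0-v$ runs over $V\setminus\{0\}$, giving the constant $\prod_{0\neq w\in V}w=c_0$. For the coset indexed by $j\neq j_0$, the substitution $w=v_0-v$ turns the product into $\prod_{w\in V}\bigl((\theta^{j_0}-\theta^{j})\beta+w\bigr)=L_V\bigl((\theta^{j_0}-\theta^{j})\beta\bigr)=(\theta^{j_0}-\theta^{j})\,\xi$, again by $\mathbb{F}_p$-linearity of $L_V$ and $\theta^{j_0}-\theta^{j}\in\mathbb{F}_p$. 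Taking the product over $j\neq j_0$ and invoking Lemma \ref{ha} with modulus $2t$ and primitive $2t$-th root $\theta$ (valid since $2t\mid p-1\mid q-1$) yields $\prod_{j\neq j_0}(\theta^{j_0}-\theta^{j})=2t\,\theta^{-j_0}$. Altogether $L_{\mathbf{a}}(a)=\bigl(c_0\,\xi^{\,2t-1}\,2t\bigr)\,\theta^{-j_0}$, where the parenthesized factor is a nonzero constant independent of $a$.

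Finally I would set $\lambda$ equal to the inverse of that constant, so that $\lambda\,L_{\mathbf{a}}(a)=\theta^{-j_0}$ for every point. The hypothesis that $\frac{q-1}{2t}$ is even means $\theta=g^{(q-1)/2t}$ (for a primitive root $g$ of $\mathbb{F}_q$) is a square, whence $\theta^{-j_0}\in\mathbb{F}_q^{*2}$ for all $j_0$; Corollary \ref{c1} then produces the desired $q$-ary $[2tp^{e},\,tp^{e}]$ MDS self-dual code. The main obstacle is the cross-coset computation, which is precisely where the choice of $V$ as an $\mathbb{F}_p$-subspace is essential: only then is $L_V$ additive, and only because $\theta\in\mathbb{F}_p$ does scaling the evaluation points by $\theta$ scale the $L_V$-values by the same $\theta$, collapsing the cross-coset product into the pure cyclotomic product handled by Lemma \ref{ha}. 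The remaining points—distinctness, the bound $n<q$, and absorbing $c_0$, $\xi^{\,2t-1}$, and $2t$ into $\lambda$—are routine.
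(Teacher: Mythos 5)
Your proof is correct and follows essentially the same route as the paper: both take $\mathbf{a}$ to be the union of $2t$ cosets of an $e$-dimensional $\mathbb{F}_p$-subspace $V$ translated by powers of an order-$2t$ element of $\mathbb{F}_p^*$, show that $L_{\mathbf{a}}$ equals a fixed constant times $\theta^{-j_0}$, and conclude via Corollary \ref{c1} using that $\frac{q-1}{2t}$ even makes $\theta$ a square. The only (harmless) differences are that you use a general translate $\beta\notin V$ where the paper takes $\beta=1$ with $V\cap\mathbb{F}_p=0$, and you evaluate the cross-coset product explicitly through the linearized polynomial $L_V$ and Lemma \ref{ha} rather than just observing its independence of the coset index.
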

 \begin{proof}
 Let $V$ be an $e$-dimensional $\mathbb{F}_{p}$-vector subspace in $\mathbb{F}_{q}$ satisfying $V \cap\ \mathbb{F}_{p}=0$.
 Denote by $ \omega\in \mathbb{F}_{p}$ a primitive element of order $2t$.  Choose\\
 $$\mathbf{a}= \bigcup\limits_{j=0}^{2t-1}\left(\omega^j+V\right).$$
 For any $b \in \omega^i+V$,
 \begin{equation}\label{def L}
 \begin{array}{rcl}
 L_\mathbf{a}(b)&=
 &\left(\prod\limits_{0\neq v\in V} v\right)\cdot \left(\prod\limits_{j=0, j\neq i}^{2t-1}\prod\limits_{v\in V}\left(\omega^i-\omega^j+v\right)\right)\\[4mm]
 &=&\left(\prod\limits_{0\neq v\in V} v\right)\cdot\left(\prod\limits_{v\in V} \omega^{i(2t-1)} \prod\limits_{h=1}^{2t-1} \left(1+ \omega^{-i}v-\omega^h\right)\right)\\[4mm]
 &=& \omega^{-i} \left(\prod\limits_{0\neq v\in V} v\right)\cdot \left(\prod\limits_{v\in V} \prod\limits_{h=1}^{2t-1} \left(1+ v-\omega^h\right)\right)\\[4mm]
\end{array}
 \end{equation}
 where the last equality follows from that $\prod\limits_{v\in V}\omega^{i(2t-1)}=\omega^{-ip^e}=\omega^{-i}$ and $\omega^{-i}v$ runs through $V$
 when $v$ runs through $V$.

 Denote by $c=\left(\prod\limits_{0\neq v\in V} v\right)\cdot \left(\prod\limits_{v\in V} \prod\limits_{h=1}^{2t-1} \left(1+ v-\omega^h\right)\right)$.
 Then
 $ L_\mathbf{a}(b)=\omega^{-i} c$. Since $\frac{q-1}{2t}$ is even,
 we can deduce that $\omega \in \mathbb{F}_{q}^{*2}$. As a consequence,
 $\eta\left(L_\mathbf{a}(b)\right)=\eta(c)$ which is independent of the choice of $b$. In this case we choose $\lambda= c$. Then
 $\eta\left(\lambda L_\mathbf{a}(b)\right)=\eta(\lambda c)=1$. By Corollary \ref{c1},
 there exists self-dual MDS code with length $|\mathbf{a}|=2tp^e$.
 \end{proof}
\begin{remark}
For $t=1$, the preceding result is exactly Theorem 4 (ii) in [\ref{Yan}].
\end{remark}

In the case $q$ is a square, usually Theorems 1-4 will present more
classes of MDS self-dual codes than the previous results.
\begin{example}
For $q=151^2$,
  there are  243 different $n$ for which MDS
  self-dual code of lengths $n$ are constructed in all the previous
  works(in Table 1).
  In our constructions (Theorems 1-4), there are 713 different classes of MDS self-dual codes of different lengths.
\end{example}

\section*{Acknowledgements}
{This work is supported by the self-determined research funds of
CCNU from the colleges' basic research and operation of MOE (Grant
No. CCNU18TS028). The work of J.Luo is also supported by NSFC under
Grant 11471008.}

\begin{thebibliography}{99}
\item \label{AKL} T. Aaron Gulliver, J.L. Kim, and Y. Lee, New MDS and near-
MDS self-dual codes, \emph{IEEE Trans. Inf. Theory}, vol. 54, no. 9, pp. 4354--4360, Sept. 2008.

\item \label{BGG} K. Betsumiya, S. Georgiou, T. A. Gulliver, M. Harada, and
 C. Koukouvinos, On self-dual codes over some prime fields, \emph{Discrete
 Math.}, vol. 262, nos. 1-3, pp. 37--58, 2003.

\item \label{CDG} R. Cramer, V. Daza, I. Gracia, J. J. Urroz, G. Leander, J. Marti-Farre, C. Padro, On codes, matroids and secure multi-party computation
from linear secret sharing schemes, \emph{IEEE Trans. Inf. Theory},
vol. 54, no. 6, pp. 2647--2657, June 2008.

\item \label{DMS} S. T. Dougherty, S. Mesnager, and P. Sol$\acute{\mathrm{e}}$, Secret-sharing schemes
based on self-dual codes, in \emph{Proc. Inf. Theory Workshop},
pp. 338--342, May 2008.

\item\label{DSDY} S.H. Dau, W. Song, Z. Dong, and C. Yuen,  ``Balanced spareseset generator matrices for MDS codes,
" in \emph{Proc. Inter. Symp. Inf. Theory}, July 2013, pp.
1889--1893.

\item \label{GK} S. Georgiou and C. Koukouvinos, MDS self-dual codes over large
prime fields, \emph{Finite Fields Their Appl.}, vol. 8, no. 4, pp. 455--470,
Oct. 2002.

\item\label{GA} M. Grassl and T. Aaron Gulliver, On self-dual MDS codes, \emph{Proceedings of ISIT}, 2008, pp. 1954--1957.

\item\label{GUE} K. Guenda, New MDS self-dual codes over finite fields,\emph{Designs Codes Cryptogr.}, vol. 62, no. 1, pp. 31--42, Jan. 2012.

\item \label{JX} L. Jin and C. Xing, New MDS self-dual codes from generalized Reed-
Solomon codes, \emph{IEEE Trans. Inf. Theory}, vol. 63, no. 3, pp. 1434--1438, Mar. 2017.

\item \label{KKO} J.I. Kokkala, D.S. Krotov and P.R.J.
$\ddot{\mathrm{O}}$sterg$\ddot{\mathrm{a}}$rd, ``On the
classification of MDS codes", \emph{IEEE Trans. Inf. Theory}, vol.
61, no. 2, pp. 6485--6492, Dec. 2015.

\item \label{LR} E. Louidor and R.M. Roth ``Lowest density MDS codes over extension
alphabets", \emph{IEEE Trans. Inf. Theory}, vol. 52, no. 2, pp.
3187--3197, July 2006.

\item \label{MS} F. J. MacWilliams and N. J. A. Sloane, \emph{The Theory of Error-Correcting Codes}. Amsterdam, The Netherlands: North Holland, 1977.

\item \label{MAS} J. Massey, Some applications of coding theory in cryptography, in
\emph{Proc. 4th IMA Conf. Cryptogr. Coding,} 1995, pp. 33--47.

\item \label{KL} J. L. Kim and Y. Lee, Euclidean and Hermitian self-dual MDS codes over large finite fields, \emph{J. Combinat. Theory, Series A}, vol. 105,
 no. 1, pp. 79--95, Jan. 2004.

\item\label{Yan} H. Yan, A note on the construction of MDS self-dual codes, \emph{Cryptogr. Commun.}, Published online, March 2018, see https://doi.org/10.1007/s12095-018-0288-3

\end {thebibliography}
\end{document}